\theoremstyle{plain}
\newtheorem{thm}{Theorem}
\newtheorem{lem}[thm]{Lemma}
\title{Distributed pursuit algorithms for probabilistic adversaries on connected graphs}
\date{}
\author{Jesse Geneson\\
\small\tt geneson@gmail.com\\
}
\begin{document}
\maketitle

\begin{abstract}
A gambler moves between the vertices $1, \ldots, n$ of a graph using the probability distribution $p_{1}, \ldots, p_{n}$. Multiple cops pursue the gambler on the graph, only being able to move between adjacent vertices. We investigate the expected capture time for the gambler against $k$ cops as a function of $n$ and $k$ for three versions of the game:
\begin{itemize}
\item known gambler: the cops know the gambler's distribution
\item unknown gambler: the cops do not know the gambler's distribution
\item known changing gambler: the gambler's distribution can change every turn, but the cops know all of the gambler's distributions from the beginning
\end{itemize}
We show for $n > k$ that if the cops are allowed to choose their initial positions before the game starts and before they know the gambler's distribution(s), and if both the gambler and the cops play optimally, then the expected capture time is $\Theta(n/k)$ for the known gambler, the unknown gambler, and the known changing gambler.
\end{abstract}

\section{Introduction}

Graph pursuit games have been investigated for decades \cite{KW, T1, T2, T3, T4, NW, Q, G} because of their applications to anti-incursion programs. The adversary in the game is called a gambler if they move among the vertices $1, \ldots, n$ following a probability distribution $p_{1}, \ldots, p_{n}$. The gambler is called a known or unknown gambler depending on whether the cops know their probability distribution or not.

Gambler-pursuit games model anti-incursion programs navigating a linked list of ports, trying to minimize interception time for enemy packets. Komarov and Winkler proved that the expected capture time on any connected $n$-vertex graph is exactly $n$ for a known gambler against a single cop, assuming that both players use optimal strategies \cite{KW}. For an unknown gambler, Komarov and Winkler proved an upper bound of approximately $1.97n$ for a single cop \cite{KW}.

Komarov and Winkler conjectured that the general upper bound for the unknown gambler against a single cop on a connected $n$-vertex graph can be improved from about $1.97n$ to $3n/2$, and that the star is the worst case for this bound. We improved the upper bound for the unknown gambler against a single cop to $1.95n$ in \cite{G}. 

In Section \ref{2}, we prove a lemma about partitioning trees that we use for the distributed pursuit algorithms in the fourth and fifth sections. In Section \ref{3}, we show that if the cops can choose their initial positions before the game starts but after they learn the gambler's distribution, then the expected capture time on a connected $n$-vertex graph is $max(1,n/k)$ for the known gambler versus $k$ cops. In Section \ref{4}, we show an upper bound of $3.94n/k+1.16$ on the expected capture time of the unknown gambler versus $k$ cops on a connected $n$-vertex graph with $n > k$. 

In the last section, we consider a gambler that can change their probability distribution every turn, which we call the \emph{changing gambler}. When the cops know all of their distributions at the beginning, they are called a \emph{known changing gambler}. We show for $n > k$ that the expected capture time for the known changing gambler against $k$ cops on a connected graph with $n$ vertices is at most $6.33 n / k+3.17$, assuming that the cops choose their initial positions before the game starts and before they know the gambler's distributions.

\section{Choosing sectors for each cop}\label{2}

In the distributed pursuit algorithms for the unknown gambler in Section \ref{4} and the known changing gambler in Section \ref{5}, the cops split the $n$-vertex connected graph into sectors. There are at most $k$ sectors of size at most $2n / k+1$ and each cop goes to a sector. If there are fewer than $k$ sectors, then some cops will patrol the same sector. 

Next we explain how to choose the sectors for a given connected graph. Suppose the connected graph $G$ has $n$ vertices. Let $T$ be a spanning subtree of $G$. The sectors are chosen by repeatedly applying the following lemma with $x = n / k+1$. The sector is $T|_{S}$ and the vertex $v$ is the only vertex in the sector that might be included in other sectors.

\begin{lem}\label{parti}
If $T = (V, E)$ is a tree with $|V| = n$ and $n > x$, then there exists a subset $S \subset V$ and a vertex $v \in S$ such that $x < |S| \leq 2x-1$, $T|_{S}$ is connected, and $T|_{(V-S)\cup \left\{v \right\}}$ is connected.
\end{lem}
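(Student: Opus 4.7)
I will root the tree $T$ at an arbitrary vertex $r$, and for each vertex $u$ let $s(u)$ denote the size of the subtree rooted at $u$. Since $s(r) = n > x$, by starting at $r$ and descending to any child whose subtree exceeds $x$ as long as one exists, I obtain a vertex $u$ with $s(u) > x$ such that every child $c$ of $u$ satisfies $s(c) \leq x$. In every case I will take $v = u$ and let $S$ consist of $u$ together with the full subtrees rooted at some subset of its children. Under this template, $T|_S$ is automatically connected because $u$ is adjacent to every chosen child, and $T|_{(V-S) \cup \{u\}}$ is automatically connected because $u$ is adjacent to its parent in $T$ (if one exists) and to each unchosen child, so the ancestral part and the omitted child subtrees all glue to $u$.

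With connectivity for free, the remaining task is to choose the right subset of children so that $x < |S| \leq 2x - 1$. If $s(u) \leq 2x - 1$, I take $S$ to be the whole subtree rooted at $u$ and the size condition is immediate. Otherwise $s(u) \geq 2x$, and the children $c_1, \ldots, c_m$ have subtree sizes summing to $s(u) - 1 \geq 2x - 1$. If some child $c_i$ already has $s(c_i) = x$, I simply take $S = \{u\} \cup T_{c_i}$, giving $|S| = x + 1 \leq 2x - 1$. Otherwise every $s(c_i) \leq x - 1$, and I would run a greedy procedure: order the children arbitrarily and let $j$ be the smallest index with $1 + s(c_1) + \cdots + s(c_j) > x$. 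Such $j$ exists because the total exceeds $x$, and minimality gives $1 + s(c_1) + \cdots + s(c_{j-1}) \leq x$, so $S = \{u\} \cup T_{c_1} \cup \cdots \cup T_{c_j}$ satisfies $|S| \leq x + s(c_j) \leq x + (x - 1) = 2x - 1$, while $|S| > x$ by construction.

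The only delicate step is forcing the upper bound $|S| \leq 2x - 1$ rather than a weaker $|S| \leq 2x$. A naive greedy that just adds child subtrees until the running total first exceeds $x$ can end at $|S| = 2x$, and that happens precisely when the pre-addition size is exactly $x$ and the next child subtree has size exactly $x$. Peeling off the special case where some child already has a subtree of size exactly $x$ removes that coincidence and reduces the remaining greedy to the strict regime $s(c_i) \leq x - 1$, in which the worst case $(x - 1) + (x - 1) + 1 = 2x - 1$ meets the required bound on the nose. The remaining verifications (existence of $u$, the edge case $u = r$ in which $V - S$ may be empty, and the routine connectivity checks for the two induced subgraphs) all follow from the general template described in the first paragraph.
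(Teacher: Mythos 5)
Your template is sound and is, in spirit, the same construction as the paper's: you take $v=u$ and build $S$ from $u$ together with whole subtrees hanging off $u$, where $u$ is a lowest vertex with $s(u)>x$; your connectivity checks for $T|_S$ and $T|_{(V-S)\cup\{u\}}$ are correct (and in fact more explicit than the paper's inductive step, which you avoid by rooting the tree). The genuine gap is in the size analysis: your dichotomy ``some child has $s(c_i)=x$, otherwise every $s(c_i)\leq x-1$'' silently assumes $x$ is an integer. The paper applies the lemma with $x=n/k+1$, which is in general not an integer, and then the dichotomy fails: no child can have size exactly $x$, but children may have size $\lfloor x\rfloor>x-1$, and with an \emph{arbitrary} ordering your greedy can stop at $|S|=2\lfloor x\rfloor$, which exceeds $2x-1$ whenever the fractional part of $x$ lies strictly between $0$ and $1/2$. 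Concretely, take $x=4.2$ and let $u$ have children whose subtrees have $3$, $4$, and $4$ vertices, so $s(u)=12>2x-1$ and every child subtree has at most $x$ vertices; processing the size-$3$ child first gives $1+3=4\leq x$ and then $1+3+4=8>7.4=2x-1$. For integer $x\geq 2$ your argument is correct, but the non-integer case is exactly the one the paper needs.

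The repair is the device the paper uses (``each next largest branch''): add the child subtrees in nonincreasing order of size. If the first child already pushes the count past $x$, then $|S|=1+s(c_1)\leq 1+x\leq 2x-1$ (using $x\geq 2$); if at least two children are used, then $1+s(c_1)\leq x$ forces $s(c_1)\leq x-1$, and the last subtree added has size at most $s(c_1)$, so $|S|\leq x+(x-1)=2x-1$ with no integrality assumption. Two smaller points of the same flavor: ``otherwise $s(u)\geq 2x$'' should read $s(u)>2x-1$ when $x$ is not an integer, and your bound $x+1\leq 2x-1$ in the exact-size case needs $x\geq 2$ — harmless here, since the application has $x=n/k+1>2$ for $n>k$, an assumption the paper's own proof also uses implicitly.
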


\begin{proof}
We prove this by induction on $n$. If $x < n \leq 2x-1$, then let $S = V$ and let $v$ be any vertex in $S$. If $n > 2x-1$, then suppose that the lemma is true for all $m$ such that $m < n$ and pick any vertex $u \in V$. 

Find the branch $B$ of $T$ starting from (and including) $u$ with the most vertices. If $|B| > 2x-1$, then use the inductive hypothesis on $B-\left\{u \right\}$ to find $S$ and $v$. If $x < |B| \leq 2x-1$, then let $S = B$ and let $v = u$. 

Otherwise if $|B| \leq x$, then let $C = B$ and continue to add all of the vertices from each next largest branch to $C$ until $C$ has more than $x$ vertices. Since $B$ was the largest branch starting from $u$, $x < |C| \leq 2x-1$ at the end of the process. Thus let $S = C$ and $v = u$.
\end{proof}

\section{Known gambler}\label{3}

If the cops start from the same location and the gambler knows the cops' location before choosing their distribution, then the maximum possible expected capture time for a connected $n$-vertex graph is $n$, regardless of the number of cops. 

\begin{lem}
If $k$ police that cannot choose their starting positions pursue a known gambler on a connected $n$-vertex graph, then the maximum possible expected capture time is $n$.
\end{lem}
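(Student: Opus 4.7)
The plan is to reduce to Komarov and Winkler's single-cop result cited in the introduction: on any connected $n$-vertex graph and against any known gambler distribution, a single cop starting from any vertex has a strategy achieving expected capture time at most $n$, and some gambler distribution forces expected capture time at least $n$ against every single-cop strategy. For the upper bound, suppose the $k$ cops are forced to start at a common vertex $v_{0}$ and the gambler has fixed a distribution $p$. I would designate one of the cops as the ``lead'' and instruct it to play the Komarov--Winkler optimal single-cop strategy from $v_{0}$ against $p$; the remaining $k-1$ cops can stay at $v_{0}$ or act arbitrarily. Since a capture by any cop ends the game, the capture time of the group is no larger than that of the lead cop alone, so the expected capture time is at most $n$.

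For the matching lower bound, I would simply note that the $k=1$ instance already realizes expected capture time exactly $n$ on any connected $n$-vertex graph by Komarov--Winkler, so the ``maximum possible expected capture time'' is indeed attained at $n$ (even when every cop beyond the first is essentially wasted, since they all start in the same place). The only subtle point to flag is the precise form of Komarov--Winkler's upper bound that is being invoked: the bound of $n$ must apply to a single cop starting from the adversarially fixed vertex $v_{0}$, not just from a vertex of the cop's own choosing. Provided that form is available (as it is from the original proof), the rest of the argument is the monotonicity observation that adding cops at a common starting location can never delay capture relative to any one of them acting alone.
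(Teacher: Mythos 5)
Your upper bound matches the paper's: hand the game to a single designated cop running the Komarov--Winkler strategy from the forced starting vertex, and note that the other cops can only help. The genuine gap is in the lower bound. The lemma concerns the $k$-cop game, so you must exhibit a connected $n$-vertex graph, adversarial starting positions, and a gambler distribution against which \emph{no} joint strategy of all $k$ cops achieves expected capture time below $n$. Citing that the $k=1$ game has value exactly $n$ does not accomplish this, and your supporting claim---that cops beyond the first are ``essentially wasted'' because they start at a common vertex---is false in general: cops that start together may disperse on later turns. For example, on a cycle or a star with all $k$ cops placed at a single vertex and the gambler playing the uniform distribution, the cops can spread out and achieve expected capture time $O(n/k)$, far below $n$. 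So the heart of the lower bound is precisely to find a configuration where the extra cops provably cannot help, and your proposal never produces one.

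The paper's proof does: take the path $P_{n}$, place all $k$ cops at one end, and let the gambler sit at the opposite end with probability $1$. Capture then requires some cop to traverse distance $n-1$, which takes the same number of turns no matter how many cops there are, so the expected capture time is $n$. With such a concrete worst case added, your argument becomes correct and coincides with the paper's; as written, the lower half of the lemma is unproven.
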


\begin{proof}
The upper bound follows from the upper bound for $1$ cop in \cite{KW}, while the lower bound follows from considering the path $P_{n}$ with the cops starting at one end and the gambler being at the other end with probability $1$. 
\end{proof}

Now we consider the case when the cops can choose their initial positions knowing the gambler's distribution. The next lemma shows that the expected capture time is at least $max(1, n / k)$ assuming that the cops play optimally.

\begin{lem}\label{uni}
If the gambler uses the uniform distribution on a connected $n$-vertex graph, then the expected capture time is $max(1, n / k)$.
\end{lem}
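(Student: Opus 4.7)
The plan is to prove matching lower and upper bounds of $\max(1, n/k)$.

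For the lower bound, the key observation is that at every turn, whatever the cops do, they occupy at most $k$ distinct vertices. Since the gambler samples its position each turn independently and uniformly on $\{1, \ldots, n\}$, the conditional probability of capture on turn $t$, given survival through turn $t-1$, is at most $k/n$, regardless of any adaptive or randomized strategy the cops use. Letting $T$ denote the capture time, this gives $P(T > t) \geq (1 - k/n)^t$ when $k < n$, and summing tail probabilities yields $E[T] \geq n/k$. Since $E[T] \geq 1$ trivially, we obtain $E[T] \geq \max(1, n/k)$ in all cases.

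For the upper bound I would exhibit a simple stationary strategy. If $n \leq k$, the cops occupy every vertex of $G$ and catch the gambler on turn $1$, giving $E[T] = 1$. If $n > k$, the cops pick $k$ distinct vertices of $G$ (possible because $|V(G)| = n > k$) and never move; each turn is then an independent Bernoulli$(k/n)$ trial for capture, so $T$ is geometric with mean $n/k$. In both regimes the achieved value equals $\max(1, n/k)$, matching the lower bound.

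There is no real obstacle here: the lower bound is a one-line consequence of the memoryless nature of the uniform gambler, and the upper bound is matched by a trivial stationary cop strategy. The only subtlety worth spelling out is that the lower bound holds against \emph{all} cop strategies, including adaptive randomized ones, but this is automatic because the gambler's draw at turn $t$ is independent of the cops' history-dependent choice of positions on turn $t$, so conditioning on those positions leaves the gambler uniform and the capture probability at most $k/n$.
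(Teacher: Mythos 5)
Your proposal is correct and follows the same approach as the paper: the paper's (much terser) proof likewise observes that cops occupying $k$ distinct vertices face per-turn capture probability $k/n$ against the memoryless uniform gambler, giving expected time $n/k$, and that $k \geq n$ cops cover every vertex and win on turn one. Your write-up merely makes explicit the lower-bound/upper-bound split and the independence argument that the paper leaves implicit.
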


\begin{proof}
If there are $k < n$ cops that start at different positions, then the expected capture time is $n / k$ if the gambler uses the uniform distribution. If $k \geq n$, then there is a cop at every vertex and the gambler is caught on the first turn.
\end{proof}

Next we show that the expected capture time for the known gambler is at most $max(1, n / k)$, regardless of the distribution that the gambler uses, assuming that the cops know their distribution before choosing their starting locations.

\begin{lem}
If the cops know the gambler's distribution before the game and stay at the $k$ most probable vertices, then the expected capture time is at most $max(1, n / k)$.
\end{lem}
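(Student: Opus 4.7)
The plan is to show that if the cops sit forever at the $k$ vertices of highest probability, then on each turn the gambler is captured with probability at least $\min(1, k/n)$, which makes the capture time a geometric random variable with mean at most $\max(1, n/k)$. Note that the cops are allowed to stay put because we are only concerned with capture time, not with any constraint forcing them to move; the known gambler samples each turn's position independently from the fixed distribution $p_{1},\ldots,p_{n}$, so the per-turn capture probability is exactly $\sum_{i \in C} p_{i}$, where $C$ is the set of vertices occupied by the cops.

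First I would dispose of the case $k \geq n$: the cops place one cop at every vertex, every sample of the gambler is intercepted, and the expected capture time is $1$, matching $\max(1, n/k) = 1$.

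For the main case $k < n$, reorder the vertices so that $p_{1} \geq p_{2} \geq \cdots \geq p_{n}$, and let $q = p_{1}+\cdots+p_{k}$ be the probability that the gambler lands on one of the $k$ most probable vertices on a given turn. The capture time is then geometric with parameter $q$, so its expectation is $1/q$, and it suffices to prove the inequality $q \geq k/n$, which gives $1/q \leq n/k = \max(1, n/k)$.

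The key observation, which is the only non-bookkeeping step, is the averaging inequality $p_{1}+\cdots+p_{k} \geq k/n$. I would argue this by contradiction: if $p_{1}+\cdots+p_{k} < k/n$, then $p_{k} \leq (p_{1}+\cdots+p_{k})/k < 1/n$, and since the probabilities are sorted in decreasing order, $p_{i} < 1/n$ for every $i \geq k$ as well; summing gives $p_{1}+\cdots+p_{n} < k/n + (n-k)/n = 1$, contradicting that the $p_{i}$ form a probability distribution. I do not expect a significant obstacle here: once this averaging inequality is in hand, the rest of the proof is a one-line computation, and the case analysis on whether $k \geq n$ is routine.
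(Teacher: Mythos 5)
Your proposal is correct and follows essentially the same approach as the paper: park the cops at the $k$ most probable vertices, note the per-turn capture probability is at least $\min(1, k/n)$, and conclude the expected capture time is at most $\max(1, n/k)$ by the geometric bound. The only difference is that you spell out the averaging inequality $p_{1}+\cdots+p_{k} \geq k/n$, which the paper treats as immediate.
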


\begin{proof}
If the cops stay at the $k$ most probable vertices, then on any turn the probability that the cops catch the gambler is at least $min(1, k / n)$. Thus the expected capture time is at most $max(1, n / k)$.
\end{proof} 

In the next section, we consider what happens when the cops do not know the gambler's distribution. In this case, we are still able to get an $O(n/k)$ bound on the expected capture time.

\section{Unknown gambler}\label{4}

Assume that the $k$ cops are allowed to start from any initial positions, but they do not know the gambler's distribution. We show that the expected capture time for the unknown gambler on an $n$-vertex graph is at most $3.94n / k + 1.16$. This pursuit algorithm for $k$ cops is a distributed version of the pursuit algorithm for $1$ cop in \cite{KW}.

\begin{thm}
The expected capture time for the unknown gambler versus $k$ cops on a connected $n$-vertex graph is at most $3.94n / k+1.16$ for $n > k$.
\end{thm}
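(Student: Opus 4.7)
The plan is to design a distributed algorithm that runs a copy of the 1-cop KW pursuit strategy \cite{KW} inside each of at most $k$ sectors of the tree, whose sizes are roughly $2n/k$. Concretely, fix a spanning tree $T$ of $G$ and apply Lemma~\ref{parti} iteratively with $x = n/k+1$ to obtain sectors $S_1,\dots,S_{k'}$ with $k' \le k$, each inducing a connected subtree of $T$ with $|S_i| \le 2n/k+1$. Assign the overlap vertices to a single sector so that the $S_i$ partition $V(G)$, and place cop $i$ in $S_i$ at the start of the game (if $k'<k$, extra cops can duplicate a sector without affecting the analysis). Each cop $i$ then executes the KW 1-cop algorithm on $T|_{S_i}$ for all time.

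For the analysis, let $q_i = \sum_{v \in S_i} p_v$, so $\sum_i q_i = 1$. Because the sectors are disjoint, at most one cop's location can coincide with the gambler on a given turn, so the per-turn capture probability equals $\sum_i p_{c_i(t)}$; by the independence of the gambler's moves across turns, $P(T > t) = \prod_{s=1}^t \bigl(1 - \sum_i p_{c_i(s)}\bigr)$ and $E[T] = \sum_{t \ge 0} P(T > t)$. The KW single-cop bound, applied to cop $i$ on $S_i$ against the conditional distribution $p|_{S_i}/q_i$, yields a capture time bounded by $1.97 |S_i|$ up to a small additive constant. Combining these per-sector bounds, using the disjointness inequality $1 - \sum_i p_{c_i(s)} \le \prod_i (1 - p_{c_i(s)})$ together with $\sum_i q_i = 1$, yields $E[T] \le 1.97(2n/k+1) - c = 3.94n/k + 1.16$ after a careful accounting of the additive constants in the KW bound.

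The main obstacle will be rigorously carrying out this combination step. The naive bound $E[T] \le \min_i E[T_i] \le \min_i 1.97|S_i|/q_i$ is too weak---it fails precisely when the gambler distributes its mass uniformly across all sectors, each $q_i = 1/k'$, so that $\min_i E[T_i] = \Theta(n)$. The correct argument must exploit the fact that spreading the gambler's distribution makes each cop's conditional distribution dense on its sector, so that the per-turn catch rates across cops add up: if each cop's strategy is shown (using, say, the approximate periodicity of the KW Euler-tour-style patrol) to produce a geometric-style tail on $T_i$ with rate approximately $q_i/(1.97|S_i|)$, then the sum of rates across cops is at least $1/(1.97 M)$, where $M = \max_i |S_i| \le 2n/k+1$, recovering the claimed $3.94n/k+1.16$ bound.
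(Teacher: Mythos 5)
Your setup (spanning tree, Lemma~\ref{parti} with $x=n/k+1$, sectors of size at most $2n/k+1$, one cop per sector) matches the paper, and you correctly identify the crux: the naive combination $E[T]\le\min_i 1.97|S_i|/q_i$ collapses when the gambler spreads mass across sectors. But you do not actually close that gap. The closing step you propose---that the KW per-sector algorithm yields a geometric tail on $T_i$ with rate roughly $q_i/(1.97|S_i|)$, and that these rates add across cops---is asserted, not proved; the KW $1.97n$ analysis is an expected-time bound for a normalized distribution on the cop's whole territory, and restricting to sector $i$ either forces the renormalization by $q_i$ (reintroducing the $1/q_i$ slowdown you are trying to avoid) or requires a per-round covering guarantee that the $1.97n$ bound does not supply. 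The final constant $1.97(2n/k+1)-c=3.94n/k+1.16$ is reverse-engineered from the target rather than derived; the coincidence $1.97\times 2=3.94$ is not where the paper's constant comes from.

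The paper's mechanism for making the rates add is different and simpler: synchronize the cops into rounds, where in one round each cop performs a single depth-first search of its sector's spanning tree, waiting an extra turn at each leaf, with a coin flip choosing forward or backward. The point of the waits is that within one round every vertex of its sector is occupied by its cop on at least two turns; since the sectors cover $V(G)$ and are disjoint per turn, your own identity $P(T>t)=\prod_s\bigl(1-\sum_i p_{c_i(s)}\bigr)$ together with $1-\sum_i p_{c_i(s)}\le\prod_i(1-p_{c_i(s)})$ gives a per-round evasion probability at most $\prod_{v\in V}(1-p_v)^2\le e^{-2\sum_v p_v}=e^{-2}$, with no dependence on how the mass splits among sectors---this is exactly the ``rates add in the exponent'' step you gesture at. The constants then come from the round length: at most $6n/k+1$ turns per round, expected at most $3n/k+1$ turns in the successful round thanks to the direction coin flip, giving $(1/(1-e^{-2})-1)(6n/k+1)+3n/k+1\approx 3.94n/k+1.16$. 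To repair your argument you would need to replace the appeal to the KW $1.97|S_i|$ bound by such a per-round ``every vertex covered at least twice'' guarantee (e.g.\ the DFS-with-waits tour), at which point your disjointness inequality already finishes the proof.
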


\begin{proof}
Let $G$ be the connected $n$-vertex graph. Use Lemma \ref{parti} to partition $G$ into at most $k$ sectors that are connected subgraphs $G_{1}, \ldots, G_{k}$ of size at most $2n / k+1$ and let $T_{1}, \ldots, T_{k}$ be spanning subtrees of $G_{1}, \ldots, G_{k}$ respectively. Suppose that for each $1 \leq i \leq k$, cop $i$ performs a depth first search of $T_{i}$, but they stay at the leaves of $T_{i}$ for an extra turn.

The cops flip coins to decide whether to perform their depth first searches forward or backward. Thus the total number of turns in a single depth first search (including the extra turns for the leaves) is at most $6 n / k+1$. The search is repeated until capture. Since the cops flip coins to decide whether to search forward or backward, the expected number of turns in the successful depth first search is at most $3 n / k+1$. Let one round of the pursuit algorithm refer to the cops completing a single depth first search in parallel, with some cops sitting during the end of the round to wait for the other cops to finish their round.

As in \cite{KW}, suppose that the gambler uses probability distribution $p_{1}, \ldots, p_{n}$. Then the probability of evasion for a single round is at most $\prod_{i=1}^{n} (1-p_{i})^{2} < e^{-2}$, so an average of $1/(1-e^{-2})$ rounds will suffice for capture. Thus the expected capture time is at most $ (1/(1-e^{-2})-1) (6n / k + 1) + 3  n / k + 1 \approx 3.94n / k + 1.16$.
\end{proof}

\section{Changing gamblers}\label{5}

The expected capture time for a changing gambler is $\Omega(n/k)$ by Lemma \ref{uni}. We show below that the known changing gambler has $O(n/k)$ expected capture time. 

\begin{lem}
For $n > k$ the expected capture time for the known changing gambler against $k$ cops on a connected graph with $n$ vertices is at most $6.33 n / k+3.17$, assuming that the cops choose their initial positions before the game starts and before they know the gambler's distributions.
\end{lem}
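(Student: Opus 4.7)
The plan is to adapt the distributed DFS strategy of Section~\ref{4}, replacing the leaf-waiting device (which ensured each vertex was visited twice) with a new device that exploits the cops' knowledge of the entire distribution schedule. First, apply Lemma~\ref{parti} to partition $G$ into $k$ connected sectors $G_1, \ldots, G_k$ of size at most $2n/k + 1$, with spanning subtrees $T_1, \ldots, T_k$. Each cop $i$ patrols $G_i$ by executing repeated DFS closed walks of $T_i$, each of length at most $4n/k$ edges; with synchronization across cops, the common round length may be taken to be $L \leq 4n/k + 2$.

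The new ingredient is that at the start of each round, cop $i$ deterministically chooses a cyclic offset of its DFS schedule (equivalently, which position of the DFS to start the round from) so as to maximize the expected number of captures by cop $i$ during that round; this choice depends on the known distribution values for the upcoming round. A standard averaging argument over the $L$ possible offsets shows that the best offset yields
\[
\sum_{t \in \text{round}} p^{(t)}_{v_i(t)} \;\geq\; \frac{1}{L}\sum_{v \in S_i} c_v^{(i)} P_v^{(r)} \;\geq\; \frac{1}{L}\sum_{v \in S_i} P_v^{(r)},
\]
where $c_v^{(i)} \geq 1$ counts visits to $v$ per DFS cycle and $P_v^{(r)}=\sum_{t \in r} p^{(t)}_v$ is the total probability mass at $v$ during round $r$. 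Summing over cops and using that the sectors cover $V$, together with $\sum_v p^{(t)}_v=1$ for each $t$, the expected total number of captures per round is at least $\frac{1}{L}\sum_v P_v^{(r)}=1$.

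Applying the standard inequality $\prod_t(1-x_t)\le\exp(-\sum_t x_t)$, the per-round escape probability is at most $e^{-1}$. Hence the expected number of rounds until capture is at most $1/(1-e^{-1})$, and the expected capture time is at most $L/(1-e^{-1})\le(4n/k+2)/(1-e^{-1})\approx 6.33n/k+3.17$, which matches the claimed bound.

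The main obstacle is the non-disjointness of sectors from Lemma~\ref{parti}: adjacent sectors may share a boundary vertex, and if two cops coincide there during a round, then the quantity $\sum_t\sum_{v\in A_t}p^{(t)}_v$ that actually controls the escape probability is strictly smaller than the sum $\sum_i\sum_t p^{(t)}_{v_i(t)}$ lower bounded above, weakening the argument. This must be resolved either by coordinating adjacent cops' offsets so as to avoid collisions at shared boundary vertices (which the cops can do since they know the distributions in advance), or by refining the sector construction so that each vertex lies in exactly one sector. A secondary but easier issue is the transition between successive rounds using different cyclic offsets, which can be absorbed into the synchronization slack built into the round length $L$.
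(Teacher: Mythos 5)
Your proposal has two genuine gaps, and the first one is load-bearing for the claimed constant. Your round structure leaves no room for the cops to actually realize the chosen cyclic offsets: at the end of a round, cop $i$ stands wherever its walk stopped, and to begin the next round at the optimal offset it must first travel there, which can take up to about $2n/k$ turns (the diameter of $T_i$), not the $O(1)$ turns of slack in $L \le 4n/k+2$. This is not a secondary synchronization issue. If you repair it by adding a repositioning phase, the round grows to roughly $2n/k + 4n/k$ turns and the bound degrades to about $(6n/k+O(1))/(1-1/e) \approx 9.5 n/k$, missing the claimed $6.33n/k+3.17$. If instead you drop the offset choice and let each cop keep cycling its DFS walk, the argument collapses entirely: the known changing gambler knows the cops' deterministic schedule in advance and can place all of its mass each turn on vertices no cop occupies, so your averaging bound no longer controls anything. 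The second gap, which you flag but do not resolve, is the double-counting when two cops coincide at a vertex shared by adjacent sectors: the quantity that controls escape is $\sum_t \sum_{v \in A_t} p^{(t)}_v$, which can be strictly smaller than the sum $\sum_i \sum_t p^{(t)}_{v_i(t)} \ge 1$ you bound, so the per-round escape bound $e^{-1}$ is not established as written.

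The paper's strategy is simpler and sidesteps both problems. Each round consists of two phases of $2n/k+1$ turns: in the first phase cop $i$ walks to the vertex of its sector with the highest probability averaged over the second phase (reachable, since the sector has at most $2n/k+1$ vertices and hence its spanning tree has diameter at most $2n/k$); in the second phase it simply sits there. If a shared vertex is the best vertex of several sectors, only one cop takes it and the others take their next-best vertices, so the sitting cops occupy distinct vertices, and since the sectors cover $V$ and each has at most $2n/k+1$ vertices, the averaged probabilities satisfy $p_1+\cdots+p_k \ge \frac{1}{2n/k+1}$. This gives escape probability at most $\bigl(\bigl(1-\frac{1}{2n+k}\bigr)^k\bigr)^{2n/k+1} \le 1/e$ per round and expected capture time at most $(4n/k+2)/(1-1/e) < 6.33n/k+3.17$. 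If you want to keep a patrol-based argument, you would need either to build the repositioning into a longer round and accept a worse constant, or to redesign the offset selection so that consecutive rounds' walks join up at the cop's current position.
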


\begin{proof}
Let $G$ be the connected $n$-vertex graph. Using Lemma \ref{parti}, partition $G$ into at most $k$ sectors that are connected subgraphs $G_{1}, \ldots, G_{k}$ of size at most $2n / k+1$ and let $T_{1}, \ldots, T_{k}$ be spanning subtrees of $G_{1}, \ldots, G_{k}$ respectively. Before the game starts, each cop goes to an initial position in each sector. 

Each round of the pursuit algorithms has $2$ parts consisting of $2n/k+1$ turns each: in the first part, the cops walk to the vertices in their sector that have the highest average probability for the turns in the second part of the round and sit at the vertices once they get to them; in the second part, the cops sit on the vertices with the highest average probability in their sector. 

If there are any sectors with a shared vertex that has the highest average probability in multiple sectors, then one cop goes to that vertex and the other cop(s) in the sectors go to the vertices with the next highest average probability in their sectors.

In one round, the probability of evasion for the gambler is at most $((1-p_1)..(1-p_k))^{2n/k+1}$ for some $p_{1}, \ldots, p_{k} \in [0,1]$ such that $p_1+\ldots+p_k \geq \frac{1}{2n/k+1}$. Thus the probability of evasion is at most $((1-\frac{1}{2n+k})^k)^{2n/k+1} \leq 1/e$ and the expected capture time is at most $(4n/k+2)/(1-1/e) < 6.33 n / k+3.17$.
\end{proof}

The result below gives a better upper bound for stars and other graphs of diameter at most $6$.

\begin{lem}
The known changing gambler has expected capture time at most $1+d n / k$ on a connected $n$-vertex graph of diameter $d$ against $k$ cops for $n > k$.
\end{lem}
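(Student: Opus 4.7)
The plan is to use the cops' foresight together with the bounded diameter to place all $k$ cops simultaneously at the $k$ most probable vertices of the gambler's current distribution on a regularly spaced subsequence of turns. Declare the turns $t_r := 1 + (r-1)d$ to be \emph{capture turns} for $r = 1, 2, \ldots$. Since the cops know the distributions $p^{(1)}, p^{(2)}, \ldots$ in advance, they can precompute, for every $r$, the set $C_r$ of the $k$ vertices with highest probability under $p^{(t_r)}$ (ties broken arbitrarily). The cops pick their initial positions so that at turn $t_1 = 1$ they already occupy $C_1$. For $r \geq 2$, fix any bijection between cops and vertices of $C_r$; since consecutive capture turns are exactly $d$ turns apart and the graph has diameter $d$, each cop has enough moves to reach the assigned target from the previous capture-turn position, so the strategy is feasible.

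By pigeonhole, the $k$ largest probabilities in any distribution on $n$ vertices sum to at least $k/n$, so the probability that the gambler lies in $C_r$ at turn $t_r$ is at least $k/n$. The gambler's position at turn $t_r$ is independent of the positions at earlier turns, so conditioning on no capture at $t_1, \ldots, t_{r-1}$ leaves the turn-$t_r$ marginal unchanged. Hence the conditional probability of capture at turn $t_r$, given no capture at any earlier capture turn, is at least $k/n$. Letting $R$ denote the index of the first capture turn on which capture occurs, $R$ is stochastically dominated by a geometric random variable with parameter $k/n$, giving $E[R] \leq n/k$.

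The actual capture time is at most $t_R = 1 + (R-1)d$, since any capture happening at an intermediate turn only makes capture earlier. Thus
\[
E[\text{capture time}] \leq 1 + d\bigl(E[R] - 1\bigr) \leq 1 + d\bigl(n/k - 1\bigr) \leq 1 + dn/k.
\]
The only substantive thing to verify is the feasibility of movement between capture turns, which is immediate from the diameter bound (any target is reachable in $d$ steps from any previous position, regardless of which cop is assigned to which target); there is no real obstacle, and in particular the tree-partitioning Lemma \ref{parti} from Section \ref{2} is not needed, since the argument exploits the diameter of $G$ directly.
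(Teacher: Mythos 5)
Your argument is correct and is essentially the paper's own proof: relocate every $d$ turns (feasible by the diameter bound) to the $k$ most probable vertices of the distribution at the next designated capture turn, use independence of the gambler's draws to get capture probability at least $k/n$ at each such turn, and conclude an expected capture time of at most $1+dn/k$ via a geometric bound, with no need for Lemma \ref{parti}. The only difference is that you also treat turn $1$ as a capture turn by placing the cops on $C_1$, which presumes they know $p^{(1)}$ before choosing initial positions --- contrary to the setting emphasized in the abstract and the companion lemma of Section \ref{5} --- but dropping that and letting the first capture turn be $d+1$, exactly as the paper does, yields the same bound $1+dn/k$.
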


\begin{proof}
For each $i \geq 0$, suppose that the $k$ cops identify the $k$ vertices $v_{1}, \ldots, v_{k}$ that have the highest probabilities on turn $d (i+1)+1$. During turns $d i + 2$ through $d i + d$, the cops move to $v_{1}, \ldots, v_{k}$. If they reach $v_{1}, \ldots, v_{k}$, they sit at $v_{1}, \ldots, v_{k}$. On turn $d (i+1)$+1, the cops sit at $v_{1}, \ldots, v_{k}$. For each $i \geq 0$, let the turns $d i + 2$ through $d (i+1)+1$ be considered one round. 

The probability of capture for the gambler in a single round is at least $k / n$, so the expected number of rounds is at most $n / k$. Thus the expected capture time is at most $1+d n/k$.
\end{proof}

We end with a few open questions about the expected capture time for different kinds of gamblers, assuming that the cops choose their initial positions before they know the gamblers' distribution(s):

\begin{enumerate}

\item What is the maximum possible expected capture time for the known gambler versus $k$ cops on a connected $n$-vertex graph?

\item What is the maximum possible expected capture time for the unknown gambler versus $k$ cops on a connected $n$-vertex graph?

\item What is the maximum possible expected capture time for the known changing gambler versus $k$ cops on a connected $n$-vertex graph?

\item What is the maximum possible expected capture time for the unknown changing gambler versus $k$ cops on a connected $n$-vertex graph?

\end{enumerate}

\end{document}